\def\UseBibLatex{1}
    \newcommand{\SarielComp}[1]{} \newcommand{\NotSarielComp}[1]{#1}%
    \newcommand{\SarielComp}[1]{#1}%
    \newcommand{\NotSarielComp}[1]{}%
\fi \newcommand{\IfPrinterVer}[2]{#2}%
\providecommand{\BibLatexMode}[1]{}
\providecommand{\BibTexMode}[1]{#1}
  \renewcommand{\BibLatexMode}[1]{}
  \renewcommand{\BibTexMode}[1]{#1}
  \renewcommand{\BibLatexMode}[1]{#1}
  \renewcommand{\BibTexMode}[1]{}
\theoremstyle{plain}%
\newtheorem{theorem}{Theorem}[section]
\newtheorem{lemma}[theorem]{Lemma}
\newtheorem{corollary}[theorem]{Corollary}
\newtheorem{claim}[theorem]{Claim}%
\theoremstyle{plain}%
\newtheorem*{remark:unnumbered}[theorem]{Remark}%
\newcommand{\myqedsymbol}{\rule{2mm}{2mm}}
\theoremstyle{nonumberplain}%
\newtheorem{proof}{Proof:}%
\definecolor{blue25emph}{rgb}{0, 0, 11}
\providecommand{\emphic}[2]{%
   \textcolor{blue25emph}{%
      \textbf{\emph{#1}}}%
   \index{#2}}
\providecommand{\emphi}[1]{\emphic{#1}{#1}}
\definecolor{almostblack}{rgb}{0, 0, 0.3}
\newcommand{\atgen}{\symbol{'100}}
\newcommand{\EliotThanks}[1]{%
   \thanks{%
      Department of Computer Science; University of Illinois; 201
      N. Goodwin Avenue; Urbana, IL, 61801, USA; {\tt
         erobson2\atgen{}illinois.edu}; {\tt
         \url{https://eliotwrobson.github.io/}.} #1}}
\newcommand{\SarielThanks}[1]{\thanks{Department of Computer Science;
      University of Illinois; 201 N. Goodwin Avenue; Urbana, IL,
      61801, USA; {\tt sariel\atgen{}illinois.edu}; {\tt
         \url{http://sarielhp.org/}.} #1}}
\newcommand{\HLink}[2]{\hyperref[#2]{#1~\ref*{#2}}}
\newcommand{\HLinkSuffix}[3]{\hyperref[#2]{#1\ref*{#2}{#3}}}
\newcommand{\corlab}[1]{\label{cor:#1}}
\newcommand{\clmlab}[1]{\label{claim:#1}}
\newcommand{\clmref}[1]{\HLink{Claim}{claim:#1}}
\newcommand{\lemlab}[1]{\label{lemma:#1}}
\renewcommand{\lemref}[1]{\HLink{Lemma}{lemma:#1}}%
\providecommand{\eqlab}[1]{}%
\renewcommand{\eqlab}[1]{\label{equation:#1}}
\providecommand{\remove}[1]{}%
\newcommand{\Set}[2]{\left\{ #1 \;\middle\vert\; #2 \right\}}
\newcommand{\pth}[2][\!]{\mleft({#2}\mright)}%
\renewcommand{\th}{th\xspace}
\renewcommand{\Re}{\mathbb{R}}%
\newlist{compactenumA}{enumerate}{5}%
\setlist[compactenumA]{topsep=0pt,itemsep=-1ex,partopsep=1ex,parsep=1ex,%
   label=(\Alph*)}%
\newlist{compactenuma}{enumerate}{5}%
\setlist[compactenuma]{topsep=0pt,itemsep=-1ex,partopsep=1ex,parsep=1ex,%
   label=(\alph*)}%
\newlist{compactenumI}{enumerate}{5}%
\setlist[compactenumI]{topsep=0pt,itemsep=-1ex,partopsep=1ex,parsep=1ex,%
   label=(\Roman*)}%
\newlist{compactenumi}{enumerate}{5}%
\setlist[compactenumi]{topsep=0pt,itemsep=-1ex,partopsep=1ex,parsep=1ex,%
   label=(\roman*)}%
\newlist{compactitem}{itemize}{5}%
\setlist[compactitem]{topsep=0pt,itemsep=-1ex,partopsep=1ex,parsep=1ex,%
   label=\bullet}%
\DeclareFontFamily{U}{BOONDOX-calo}{\skewchar\font=45 }
\DeclareFontShape{U}{BOONDOX-calo}{m}{n}{
  <-> s*[1.05] BOONDOX-r-calo}{}
\DeclareFontShape{U}{BOONDOX-calo}{b}{n}{
  <-> s*[1.05] BOONDOX-b-calo}{}
\DeclareMathAlphabet{\mathcalb}{U}{BOONDOX-calo}{m}{n}
\SetMathAlphabet{\mathcalb}{bold}{U}{BOONDOX-calo}{b}{n}
\DeclareMathAlphabet{\mathbcalb}{U}{BOONDOX-calo}{b}{n}
\providecommand{\Mh}[1]{#1}%
\renewcommand{\P}{\Mh{P}}%
\newcommand{\cenX}[1]{\Mh{\overline{\mathsf{c}}}_{#1}}
\newcommand{\permut}[1]{\left\langle {#1} \right\rangle}
\newcommand{\DotProdY}[2]{\permut{{#1},{#2}}}
\newcommand{\dY}[2]{\left\| {#1} {#2} \right\|}%
\newcommand{\dwY}[2]{\left\| \smash{{#1} {#2}} \right\|}%
\numberwithin{figure}{section}%
\numberwithin{table}{section}%
\numberwithin{equation}{section}%
\newcommand{\ProjWidthC}{\Mh{\overline{\omega}}}%
\newcommand{\pwY}[2]{\ProjWidthC\pth{#1, #2}}
\newcommand{\avgX}[1]{\Mh{\mu}({#1})}
\newcommand{\energyX}[1]{\Mh{\sigma}\pth{#1}}%
\newcommand{\egX}[1]{\energyX{#1}}%
\newcommand{\One}{\mathds{1}}%
\newcommand{\Sn}{\Mh{\Delta}_n}
\newcommand{\Dn}{\Mh{D}_n}
\newcommand{\ee}{\Mh{\mathcalb{e}}}
\newcommand{\hp}{\Mh{\mathcalb{h}}}
\newcommand{\Hc}{\Mh{\mathcal{H}}}
\begin{document}

\title{On the Width of the Regular $n$-Simplex}

\author{Sariel Har-Peled\SarielThanks{Work on this paper
      was partially supported by a NSF AF award
      CCF-1907400.
   }
   \and
   Eliot W. Robson%
   \EliotThanks{}%
}

\date{\today}

\maketitle

\begin{abstract}
    Consider the regular $n$-simplex $\Sn$ -- it is formed by the
    convex-hull of $n+1$ points in Euclidean space, with each pair of
    points being in distance exactly one from each other. We prove an
    exact bound on the width of $\Sn$ which is $\approx \sqrt{2/n}$.
    Specifically,
    \begin{math}
        \mathrm{width}(\Sn) = \sqrt{\frac{2}{n + 1}}
    \end{math}
    if $n$ is odd, and
    \begin{math}
        \mathrm{width}(\Sn) = \sqrt{\frac{2(n+1)}{n(n+2)}}
    \end{math}
    if $n$ is even.  While this bound is well known
    \cite{gk-iojrc-92,a-wds-77}, we provide a self-contained
    elementary proof that might (or might not) be of interest.
\end{abstract}

\section{The width of the regular simplex}

A regular $n$-simplex $\Sn$ is a set of $n + 1$ points in Euclidean
space such that every pair of points is in distance exactly $1$ from
each other. For simplicity, it is easier to work with the
simplex $\Dn$ formed by the convex-hull of
$\ee_1,\ldots, \ee_{n+1} \in \Re^{n+1}$, where $\ee_i$ is the $i$\th
standard unit vector\footnote{That is, $\ee_i$ is $0$ in all
   coordinates except the $i$\th coordinate where it is $1$.}.
Observe that $\bigl.\dwY{\smash{\ee_i}}{\smash{\ee_j}} = \sqrt{2}$,
which implies that $\Dn = \sqrt{2}\Sn$.  All the vertices of $\Dn$
lie on the hyperplane
$\hp \equiv \sum_{i=1}^{n+1} x_i = \DotProdY{(x_1,\ldots,
   x_{n+1})}{\One}= 1$, where $\One = (1, 1, \ldots , 1)$.  In
particular, the point $\cenX{n} = \One/(n + 1) \in \Dn$ is the center
of $\Dn$, and is in equal distance
\begin{equation*}
    \widehat{R_n}
    =
    \dY{\cenX{n}}{\ee_i}
    =
    \sqrt{n \frac{1}{(n+1)^2} + \pth{1-\frac{1}{n+1}}^2}
    =
    \sqrt{1 - \frac{2}{n+1} + \frac{n+1}{(n+1)^2}}
    =
    \sqrt{\frac{n}{n+1}}.
\end{equation*}
from all the vertices of $\Dn$. Note that the largest ball one can
place in $\hp$, which is still contained in $\Dn$, is centered at
$\cenX{n}$. Furthermore, it has radius
\begin{equation*}
    \widehat{r_n}
    =%
    \dY{\cenX{n}}{-(0,\cenX{n-1})}
    =%
    \sqrt{\frac{1}{(n+1)^2} + n \pth{\frac{1}{n} - \frac{1}{n+1}}^2}
    =%
    \sqrt{\frac{n+1}{n(n+1)^2}}
    =%
    \frac{1}{\sqrt{n(n+1)}}.
\end{equation*}

For a unit vector $u$, and a set $\P \subseteq \Re^{n+1}$, let
\begin{equation*}
    \pwY{u}{\P}
    =
    \max_{p \in \P} \DotProdY{u}{p} - \min_{p \in \P}
    \DotProdY{u}{p}
\end{equation*}
denote the \emphi{projection width} of $\P$ in the direction of $u$.
The \emphi{width} of $\P$ is the minimum projection
width over all directions.

\paragraph{Energy of points.}

For a vector $v = (v_1 , \ldots, v_{n+1}) \in \Re^{n+1}$, let
\begin{math}
    \avgX{v} = \sum_{i} v_i / (n+1),
\end{math}
and let
\begin{equation*}
    \widehat{v} = v - \avgX{v} \One,
\end{equation*}
be the translation of $v$ by its centroid $\avgX{v} \One$ so that
$\DotProdY{\widehat{v}}{\One} = 0$.  The \emphi{energy} of $v$ is
$\egX{v} = \norm{\widehat{v}}^2$.  The energy is the minimum $1$-mean
clustering price of the numbers $v_1 ,\ldots, v_{n+1}$. We need the
following standard technical claim, which implies that if we move a
value away from the centroid, the $1$-mean clustering price of the set
goes up.

\begin{claim}
    \clmlab{energy}%
    Let $v = (v_1 , \ldots, v_{n+1}) \in \Re^{n+1}$ be a point, and
    let $u$ be a point that is identical to $v$ in all coordinates
    except the $i$\th one, where $u_i > v_i \geq \avgX{v}$. Then
    $\egX{u} > \egX{v}$. The same holds if $u_i < v_i < \avgX{v}$.
\end{claim}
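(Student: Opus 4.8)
The plan is to reduce the claim to the elementary identity
\[
    \egX{v}
    =
    \normX{\widehat{v}}^2
    =
    \sum_{j=1}^{n+1} v_j^2 - \frac{1}{n+1}\pth{\sum_{j=1}^{n+1} v_j}^2,
\]
obtained by expanding $\normX{v - \avgX{v}\One}^2$; in other words, $\egX{v}$ is $(n+1)$ times the variance of $v_1,\ldots,v_{n+1}$. First I would write $u = v + \delta\ee_i$ with $\delta = u_i - v_i > 0$ and substitute into the identity. Only the $i$\th coordinate and the overall sum change, so with $S = \sum_j v_j = (n+1)\avgX{v}$ the difference simplifies to
\[
    \egX{u} - \egX{v}
    =
    \pth{u_i^2 - v_i^2} - \frac{1}{n+1}\pth{(S+\delta)^2 - S^2}
    =
    \delta\pth{u_i + v_i - 2\avgX{v} - \tfrac{\delta}{n+1}}.
\]

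Next I would eliminate $\delta$ from the parenthesized factor using $\delta = u_i - v_i$, which yields $u_i + v_i - \tfrac{\delta}{n+1} = \tfrac{n}{n+1}u_i + \tfrac{n+2}{n+1}v_i$, and hence
\[
    \egX{u} - \egX{v}
    =
    \delta\pth{\tfrac{n}{n+1}u_i + \tfrac{n+2}{n+1}v_i - 2\avgX{v}}.
\]
Since $n \geq 1$, both coefficients are positive and they sum to $2$, so the hypotheses $u_i > v_i \geq \avgX{v}$ give $\tfrac{n}{n+1}u_i + \tfrac{n+2}{n+1}v_i > 2v_i \geq 2\avgX{v}$; thus the factor is strictly positive, and as $\delta > 0$ we conclude $\egX{u} > \egX{v}$.

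Finally, the symmetric case $u_i < v_i < \avgX{v}$ follows by applying the case just proved to $-u$ and $-v$: the energy is invariant under negation (as $\egX{-v} = \normX{-\widehat{v}}^2 = \egX{v}$), the vector $-u$ agrees with $-v$ outside coordinate $i$, and $-u_i > -v_i > -\avgX{v} = \avgX{-v}$. I do not expect a genuine obstacle here; the only subtlety worth flagging is that $\avgX{\cdot}$ itself shifts when a coordinate is perturbed, so the tempting one-line argument ``the summand $(v_i - \avgX{v})^2$ grows'' is incomplete — the variance identity above is precisely what absorbs that shift.
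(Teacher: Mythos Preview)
Your proof is correct and follows essentially the same approach as the paper: both compute $\egX{u}-\egX{v}$ by direct algebraic expansion and factor it as $\delta$ times a quantity that is positive under the hypothesis $u_i>v_i\ge\avgX{v}$. Your use of the variance identity $\egX{v}=\sum_j v_j^2-\tfrac{1}{n+1}\bigl(\sum_j v_j\bigr)^2$ is a slightly cleaner bookkeeping device than the paper's direct tracking of the shifted mean, and your negation trick for the symmetric case is a nice touch the paper omits, but the underlying argument is the same.
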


\begin{proof}
    Let $\delta_i = (u_i - v_i )/(n + 1)$, and observe that
    $\avgX{u} = \avgX{v} + \delta_i$. As such, we have
    \begin{equation*}
        \egX{u}
        =%
        \sum\nolimits_j (u_j - \avgX{u})^2
        =%
        \sum\nolimits_j (v_j - \avgX{v}+\delta_i)^2
        - (v_i - \avgX{v} + \delta_i)^2 +
        (u_i - \avgX{v} + \delta_i)^2.
    \end{equation*}
    Since
    \begin{math}
        \sum\nolimits_j (v_j - \avgX{v}) = 0,
    \end{math}
    we have that
    \begin{equation*}
        \sum\nolimits_j (v_j -\avgX{v} +\delta_i)^2
        =%
        \sum\nolimits_j (v_j -\avgX{v})^2
        +
        \pth{2\delta_i \sum\nolimits_j (v_j -\avgX{v})}
        +
        \sum\nolimits_j \delta_i^2
        =
        \egX{v}
        + (n+1) \delta_i^2.
    \end{equation*}
    Rearranging the above and using that $u_i > v_i\geq \avgX{v}$, we have
    \begin{equation*}
        \egX{u} - \egX{v}
        =%
        (n+1) \delta_i^2
        + (u_i - \avgX{v} + \delta_i)^2
        - (v_i - \avgX{v} + \delta_i)^2
        >
        (u_i - v_i)
        (u_i + v_i -2\avgX{v}+ 2\delta_i) >0.
    \end{equation*}
\end{proof}

\begin{lemma}
    \lemlab{w:odd}%
    For $n$ odd, the width of $\Dn$ is $2/\sqrt{n+1}$, and this is
    realized by they projection width of all the directions in
    $\Hc = \Set{\smash{v/ \sqrt{n + 1}}}{ v \in \{-1, +1\}^{n+1}
       \text{ and } \DotProdY{v}{\One} =0}\Bigr.$ (and no other
    direction).
\end{lemma}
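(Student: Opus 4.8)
The plan is to translate the geometric statement into an extremal inequality for real vectors and then settle it with \clmref{energy}. I would first record the elementary reductions. Since $\Dn = \CHX{\ee_1,\ldots,\ee_{n+1}}$, the linear functional $\DotProdY{u}{\cdot}$ attains its extrema over $\Dn$ at vertices, so $\pwY{u}{\Dn} = \max_i u_i - \min_i u_i$ for every $u$. The width in question is that of $\Dn$ inside its $n$-dimensional affine hull $\hp$ (equivalently, of $\sqrt{2}\,\Sn$ in $\Re^n$), so it equals the minimum of $\pwY{u}{\Dn}$ over unit vectors $u$ orthogonal to $\One$; for such $u$ one has $\avgX{u}=0$, hence $\widehat{u}=u$ and $\egX{u}=\normX{u}^2=1$. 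Thus the width equals $\min \pth{\max_i u_i - \min_i u_i}$ taken over all $u$ with $\DotProdY{u}{\One}=0$ and $\egX{u}=1$.

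For the lower bound, fix such a $u$, put $M=\max_i u_i$, $m=\min_i u_i$, and $w = M-m = \pwY{u}{\Dn}$. The crucial step is to use \clmref{energy} repeatedly to replace each coordinate of $u$ strictly between $m$ and $M$ by whichever of $m$ or $M$ lies on the same side of the current mean as that coordinate; each such replacement does not decrease the energy and leaves $M$ and $m$, hence $w$, unchanged. After at most $n+1$ replacements one obtains a vector with exactly $k$ coordinates equal to $M$ and $n+1-k$ equal to $m$, where $1 \le k \le n$; its mean being $0$ gives $kM+(n+1-k)m=0$, and a short computation yields its energy as $\frac{k(n+1-k)}{n+1}w^2$. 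Since $n$ is odd, $n+1$ is even, so $k(n+1-k)$ is maximized at the integer $k=(n+1)/2$, where it equals $(n+1)^2/4$. Hence $1 = \egX{u} \le \frac{k(n+1-k)}{n+1}w^2 \le \frac{n+1}{4}w^2$, which rearranges to $w \ge 2/\sqrt{n+1}$.

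To finish, I would observe that each $u\in\Hc$ is a unit vector orthogonal to $\One$ with $\max_i u_i - \min_i u_i = 2/\sqrt{n+1}$, so the width is exactly $2/\sqrt{n+1}$ and is attained by every direction in $\Hc$. For the uniqueness part, suppose $\pwY{u}{\Dn} = 2/\sqrt{n+1}$ for some unit $u$ orthogonal to $\One$; then every inequality above must be an equality. Because moving a strictly interior coordinate \emph{strictly} increases the energy, $u$ must already have all of its coordinates in $\{m,M\}$; the equality $k(n+1-k)=(n+1)^2/4$ forces $k=(n+1)/2$; and then $kM+(n+1-k)m=0$ forces $M=-m$, so $M=w/2=1/\sqrt{n+1}$ and $m=-1/\sqrt{n+1}$. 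Thus $u$ has $(n+1)/2$ coordinates equal to $+1/\sqrt{n+1}$ and $(n+1)/2$ equal to $-1/\sqrt{n+1}$, that is, $u\in\Hc$.

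The step I expect to need the most care is the repeated pushing in the lower bound: each replacement shifts the mean, so \clmref{energy} must be applied one coordinate at a time, re-determining for each remaining interior coordinate which side of the updated mean it lies on; one must also handle separately a coordinate that happens to equal the current mean, where a direct look at the identity inside the proof of \clmref{energy} again shows that the energy strictly increases. Everything else is routine bookkeeping together with the single estimate $k(n+1-k)\le(n+1)^2/4$.
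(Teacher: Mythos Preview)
Your argument is correct and is essentially the paper's: both reduce via \clmref{energy} to vectors whose coordinates take only the two extreme values $m$ and $M$, and then optimize over the split; the paper frames this as a contradiction at the minimizer while you give a direct lower bound for every admissible $u$, but the content is the same. One small slip to clean up: after the replacements the resulting vector need \emph{not} have mean $0$ (you yourself observe that each replacement shifts the mean), so the relation $kM+(n+1-k)m=0$ is not available at that point in the lower-bound step. This does no damage, because the energy of any vector with $k$ coordinates equal to $M$ and $n+1-k$ equal to $m$ is $\frac{k(n+1-k)}{n+1}(M-m)^2$ regardless of its mean---$\egX{\cdot}$ is by definition computed after centering---so simply drop the mean-zero assertion there and compute the energy directly; in the uniqueness paragraph the relation \emph{is} legitimate, since there you have concluded that the original mean-zero $u$ itself already has all coordinates in $\{m,M\}$.
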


\begin{proof}
    Consider a unit vector $z$ that realizes the minimum width of
    $\Dn$ – here, in addition to $\norm{z} = 1$, we also require that
    $\DotProdY{z}{\One} = 0$, as one has to consider only directions
    that are parallel to the hyperplane containing $\Dn$.  To this
    end, let
    \begin{math}
        \beta = \max_i z_i
    \end{math}
    and
    \begin{math}
        \alpha = \min_i z_i
    \end{math}
    and observe that
    \begin{equation*}
        \mathrm{width}(\Dn)
        =%
        \pwY{z}{\Dn}
        =%
        \max_{i} \DotProdY{z}{\ee_i} - \min_{i}
        \DotProdY{z}{\ee_i}
        =
        \beta - \alpha.
    \end{equation*}
    Next, Consider the point $u$, where for all $i$ we set
    \begin{equation*}
        u_i
        =
        \begin{cases}
          \alpha  & z_i < 0\\
          \beta & z_i \geq 0.
        \end{cases}
    \end{equation*}
    A careful repeated application of \clmref{energy}, implies that
    $\egX{u} >\egX{z}$ if any coordinate of $z$ is not already either
    $\alpha$ or $\beta$. But then the point $\widehat{u}$ has (i)
    ``width'' $\beta - \alpha$, (ii) $\norm{\widehat{u}} > 1$, and
    (iii) $\DotProdY{\widehat{u}}{\One}=0$. But this implies that the
    projection width of $\Dn$ on $\widehat{u} / \norm{\widehat{u}}$ is
    $(\beta-\alpha)/ \norm{\widehat{u}} < \beta - \alpha$, which is a
    contradiction to the choice of $z$.

    Thus, it must be that all the coordinates of $z$ are either
    $\alpha$ or $\beta$. Let $t$ be the number of coordinates of $z$
    that are $\alpha$, and observe that
    \begin{equation*}
        \egX{z}
        =%
        \norm{z}^2
        =%
        t\alpha^2  + (n+1-t)\beta^2
        =
        1
        \qquad\text{and}\qquad%
        \DotProdY{z}{\One}= t\alpha + (n+1-t) \beta = 0.
    \end{equation*}
    This implies that $\beta = - \frac{t}{n+1-t}\alpha$, and thus
    \begin{equation*}
        t\alpha^2 + \frac{(n+1-t)t^2}{(n+1-t)^2} \alpha^2 = 1
        \quad\implies\quad%
        \frac{t(n+1-t)+ t^2}{n +1 -t} \alpha^2 = 1
        \quad\implies\quad%
        \alpha = -
        \sqrt{\frac{n+1-t}{t(n+1)}}.
    \end{equation*}
    Thus, the width of $\Dn$ is
    \begin{math}
        \beta - \alpha =%
        \bigl(1 + \tfrac{t}{n+1-t} \bigr) \sqrt{\frac{n+1-t}{t(n+1)}}
        =%
        \sqrt{\frac{n+1}{t(n+1-t)}}.
    \end{math}
    The last quantity is minimized when the denominator is maximized,
    which happens for $t = (n+1)/2$.  Namely, the width of $\Dn$ is
    $ 2/\sqrt{n+1}$.

    We have that $t\alpha + t\beta = 0$, which implies that
    $\alpha = -\beta$ and thus $\beta = 1/\sqrt{n + 1}$. It follows
    that $z \in \Hc$.
\end{proof}

\begin{lemma}
    \lemlab{w:even}%
    For $n$ even, the width of $\Dn$ is
    $2 \sqrt{ \smash{\frac{n+1}{n(n+2)}}\bigr.}$.
\end{lemma}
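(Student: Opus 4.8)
The plan is to follow the proof of \lemref{w:odd} almost verbatim; the only substantive change is the value of the integer parameter $t$ that minimizes the resulting expression for the width. First I would take a unit vector $z$ with $\DotProdY{z}{\One} = 0$ that realizes the minimum width of $\Dn$, and set $\beta = \max_i z_i$ and $\alpha = \min_i z_i$, so that $\mathrm{width}(\Dn) = \pwY{z}{\Dn} = \beta - \alpha$. Exactly as in the odd case, a repeated application of \clmref{energy} shows that every coordinate of $z$ must already be equal to either $\alpha$ or $\beta$: otherwise, snapping each coordinate to $\alpha$ or $\beta$ according to its sign (note $\avgX{z} = 0$) produces a vector of strictly larger energy with the same spread $\beta - \alpha$ and still orthogonal to $\One$, and normalizing it yields a direction of strictly smaller projection width — contradicting the choice of $z$.

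Next, let $t \in \{1,\dots,n\}$ be the number of coordinates of $z$ equal to $\alpha$. The two constraints $\norm{z}^2 = t\alpha^2 + (n+1-t)\beta^2 = 1$ and $\DotProdY{z}{\One} = t\alpha + (n+1-t)\beta = 0$ give, by the same algebra as before,
\begin{equation*}
    \mathrm{width}(\Dn) = \beta - \alpha = \sqrt{\frac{n+1}{t(n+1-t)}}.
\end{equation*}
It then remains to maximize $f(t) = t(n+1-t)$ over the integers $t$ in $\{1,\dots,n\}$. Over the reals $f$ peaks at $t = (n+1)/2$, which is \emph{not} an integer because $n$ is even; since $f(t) = f(n+1-t)$, the two flanking integers $t = n/2$ and $t = n/2 + 1$ are both maximizers, with $f(n/2) = \frac{n}{2}\cdot\frac{n+2}{2} = \frac{n(n+2)}{4}$. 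Substituting this back yields
\begin{equation*}
    \mathrm{width}(\Dn) = \sqrt{\frac{n+1}{n(n+2)/4}} = 2\sqrt{\frac{n+1}{n(n+2)}},
\end{equation*}
as claimed.

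I do not anticipate a genuine obstacle: given \lemref{w:odd}, everything is routine. The one place deserving a sentence of care is the integer optimization — for even $n$ the continuous optimizer is half-integral, so one must evaluate $f$ at the two nearest integers and invoke the symmetry $f(t) = f(n+1-t)$ to confirm they achieve the maximum. One could also remark, in contrast with the odd case, that the minimizing direction is no longer unique up to sign, since $t = n/2$ and $t = n/2+1$ yield genuinely different sign patterns.
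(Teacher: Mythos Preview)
Your proof is correct and follows essentially the same route as the paper's: invoke the argument of \lemref{w:odd} verbatim to reduce to minimizing $\sqrt{(n+1)/\bigl(t(n+1-t)\bigr)}$ over integers $t$, then observe that for even $n$ the maximum of $t(n+1-t)$ is attained at $t=n/2$ (equivalently $t=n/2+1$), giving $2\sqrt{(n+1)/\bigl(n(n+2)\bigr)}$. One small caveat on your closing remark: the choices $t=n/2$ and $t=n/2+1$ are related by negating the direction (if $z$ has $n/2$ coordinates at its minimum, $-z$ has $n/2+1$), so they do not produce ``genuinely different'' directions up to sign---and even in the odd case the minimizing direction is far from unique, as the set $\Hc$ in \lemref{w:odd} already shows.
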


\begin{proof}
    The proof of \lemref{w:odd} goes through with minor
    modifications. The minimum value is realized by $t = n/2$. This
    implies that
    \begin{equation*}
        \alpha
        =%
        - \sqrt{\frac{n+1-t}{t(n+1)}}
        =%
        - \sqrt{\frac{n+2}{n(n+1)}}
        \qquad\text{and}\qquad%
        \beta
        =%
        -\frac{t}{n+1-t}\alpha
        =%
        \sqrt{\frac{n}{(n+1)(n+2)}}.
    \end{equation*}
    As a sanity
    check, observe that
    \begin{math}
        t\alpha^2 + (n + 1 - t)\beta^2%
        =%
        \frac{n}{2}\frac{n+2}{n(n+1)} + \frac{n+2}{2}
        \frac{n}{(n+1)(n+2)}%
        =%
        1.
    \end{math}
    Thus, the width of
    $\Dn$ is
    \begin{equation*}
        \beta - \alpha
        =
        \sqrt{\frac{n}{(n+1)(n+2)}}
        +
        \sqrt{\frac{n+2}{n(n+1)}}
        =%
        \frac{2(n+1)}{\sqrt{n(n+1)(n+2)}}
        =%
        2 \sqrt{\frac{n+1}{n(n+2)}}.
    \end{equation*}
\end{proof}

Using that $\Sn = \Dn/ \sqrt{2}$ and rescaling the above bounds, we
get the following.

\begin{corollary}
    \corlab{bounds}%
    The width of the regular $n$-simplex $\Sn$ is
    \begin{math}
        \sqrt{2/(n + 1)}
    \end{math}
    if $n$ is odd. The width is
    \begin{math}
        \sqrt{\frac{2(n+1)}{n(n+2)}}
    \end{math}
    if $n$ is even.  The inradius (i.e., radius of largest ball inside
    $\Sn$) is
    \begin{math}
        r_n = 1/\sqrt{2n(n+1)},
    \end{math}
    and the circumradius (i.e.,
    radius of minimum ball enclosing $\Sn$ is
    \begin{math}
        R_n = \sqrt{\frac{n}{2(n+1)}}.
    \end{math}
\end{corollary}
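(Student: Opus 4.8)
The plan is to derive \corref{bounds} from the work already done by nothing more than a change of scale, using the relation $\Sn = \Dn/\sqrt2$ recorded above (a consequence of $\dwY{\smash{\ee_i}}{\smash{\ee_j}} = \sqrt2$). The one structural fact I would invoke is that all three quantities in question -- width, inradius, and circumradius -- are positively homogeneous of degree one under scaling: for $\lambda>0$ and $\P \subseteq \Re^{n+1}$ one has $\pwY{u}{\lambda\P} = \lambda\,\pwY{u}{\P}$ for every unit vector $u$, hence $\mathrm{width}(\lambda\P) = \lambda\,\mathrm{width}(\P)$; and a ball of radius $\rho$ inscribed in (resp.\ circumscribing) $\P$ maps under $x\mapsto \lambda x$ to a ball of radius $\lambda\rho$ inscribed in (resp.\ circumscribing) $\lambda\P$, so the inradius and circumradius also scale by $\lambda$. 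Here the relevant factor is $\lambda = 1/\sqrt2$.

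With that in hand, I would collect the four quantities for $\Dn$. From \lemref{w:odd}, $\mathrm{width}(\Dn) = 2/\sqrt{n+1}$ when $n$ is odd; from \lemref{w:even}, $\mathrm{width}(\Dn) = 2\sqrt{(n+1)/(n(n+2))}$ when $n$ is even. From the computations at the start of this section, the largest ball of $\Dn$ lying in the hyperplane $\hp$ is centered at $\cenX n$ and has radius $\widehat{r_n} = 1/\sqrt{n(n+1)}$, and every vertex of $\Dn$ is at distance $\widehat{R_n} = \sqrt{n/(n+1)}$ from $\cenX n$, so $\widehat{R_n}$ is the circumradius. Dividing each of these by $\sqrt2$ and simplifying the nested radicals gives $\sqrt{2/(n+1)}$, $\sqrt{2(n+1)/(n(n+2))}$, $1/\sqrt{2n(n+1)}$, and $\sqrt{n/(2(n+1))}$ respectively, which are precisely the claimed bounds.

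I do not expect a real obstacle: the mathematical content lives entirely in \lemref{w:odd} and \lemref{w:even} (and in \clmref{energy}, which powers them), and what remains is bookkeeping. The single point meriting a sentence of care is why the inscribed ball of $\Sn$ has radius exactly $\widehat{r_n}/\sqrt2$ and not less: $\Sn$ is a full-dimensional simplex inside the $n$-dimensional affine flat it spans, its inscribed ball is taken within that flat, and for $\Dn$ this extremal ball was already pinned down as the radius-$\widehat{r_n}$ ball centered at $\cenX n$; scaling carries it to the extremal inscribed ball of $\Sn$. Everything else is one-line algebra on square roots.
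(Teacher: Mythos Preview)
Your proposal is correct and follows exactly the paper's approach: the paper derives the corollary in one line by invoking $\Sn = \Dn/\sqrt{2}$ and rescaling the bounds from \lemref{w:odd}, \lemref{w:even}, and the computed $\widehat{r_n},\widehat{R_n}$. Your write-up is in fact more detailed than the paper's, which simply says ``rescaling the above bounds.''
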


\BibTexMode{%
   \bibliographystyle{alpha}
   \bibliography{width_simplex}
}%
\BibLatexMode{\printbibliography}

\end{document}